\theoremstyle{plain}
\newtheorem{thm}{Theorem}[section]
\newtheorem{prop}[thm]{Proposition}
\newtheorem*{thm*}{Theorem}
\newtheorem*{lem*}{Lemma}
\newtheorem*{prop*}{Proposition}
\newtheorem*{cor*}{Corollary}
\newtheoremstyle{named}{}{}{\itshape}{}{\bfseries}{.}{.5em}{\thmnote{#3}}
\theoremstyle{named}
\theoremstyle{remark}
\newtheorem*{rem}{Remark}
\theoremstyle{definition}
\newtheorem*{defi*}{Definition}
\renewcommand{\phi}{\varphi}
\newcommand{\RR}{\mathbb{R}}
\newcommand{\NN}{\mathbb{N}}
\newcommand\ang[1]{\left\langle #1 \right\rangle}
\newcommand\abs[1]{\left| #1 \right|}
\newcommand\norm[1]{\left|\left| #1 \right|\right|}
\newcommandx{\checkk}[2][1=]{\todo[linecolor=red,backgroundcolor=red!25,bordercolor=red,#1]{check: #2}}
\newcommandx{\fix}[2][1=]{\todo[linecolor=blue,backgroundcolor=blue!25,bordercolor=blue,#1]{fix: #2}}
\newcommandx{\improve}[2][1=]{\todo[linecolor=green,backgroundcolor=green!25,bordercolor=green,#1]{improve: #2}}
\date{\today}
\author[1]{\fnm{Cristiano} \sur{Ricci}}\email{cristiano.ricci@ec.unipi.it}
\affil[1]{\orgdiv{Department of Economics and Management}, \orgname{University of Pisa}, \orgaddress{\street{Via Cosimo Ridolfi 10}, \city{Pisa}, \postcode{56124}, \state{Italy}}}
\title{A non-invariance result for the spatial AK model}
\begin{document}

\abstract{This paper deals with the positivity condition of an infinite-dimensional evolutionary equation, associated with a control problem for the optimal consumption over space. We consider a spatial growth model for capital, with production generating endogenous growth and technology of the form AK. We show that for certain initial data, even in the case of heterogeneous spatial distribution of technology and population, the solution to an auxiliary control problem that is commonly used as a candidate for the original problem is not admissible. In particular, we show that initial conditions that are non-negative, under the auxiliary optimal consumption strategy, may lead to negative capital allocations over time.}
\maketitle

\noindent
\textbf{JEL}: R1, O4, C61\\
\textbf{AMS}: 49K20, 93C20, 58J70\\
\textbf{Keywords}: Optimal control in infinite dimension, state constraints, AK model of economic growth, evolution equations, invariance in Banach spaces.\\

\section{Introduction}
A common feature of optimal control problems in economics is the presence of state constraints. Most models dealing with economic variables, such as human or physical capital, require the variable of interest to satisfy a positivity condition which, in the framework of optimal control, translates into a state constraint. However, it is well known that the presence of a constraint either in the state or the control makes even simple problems nontrivial, even in a finite-dimensional setting \cite{hartl1995survey}. One of the main issues one has to face in the presence of state constraints is the regularity of the Hamilton-Jacobi-Bellman (HJB) equation associated with the control problem, see \cite{cannarsa1995direct,cannarsa1991boundary,capuzzo1990hamilton,faggian2008hamilton,soner1986optimal}. 
 
To overcome this difficulty, one possible approach is to consider a \emph{relaxed state constraint}. 
This amounts to considering an admissible set for the state variable that strictly contains the original one, thus allowing for a larger class of controls where one has to find the optimum. In practice, one usually considers a relaxed constraint which is more ``natural'' from the mathematical point of view, so that the corresponding problem admits an explicit solution. This often simplifies the problem so that it becomes possible to solve the corresponding HJB equation to compute the value function, identifying the optimal control. 
This approach has been successfully applied in various economic contexts, see for example   \cite{bambi2017generically,bambi2012optimal,boucekkine2013spatial,boucekkine2021control,boucekkine2019growth,fabbri2008solving}. Then one has to verify, ex-post, that the optimal auxiliary trajectory satisfies the original state constraint, proving that the actual control problem has been solved. However, it turns out that verifying the admissibility condition of the solution of the auxiliary problem for the original one is a nontrivial task. 

In this paper we focus on the verification of the positivity condition for the spatial growth model originally introduced by \cite{boucekkine2013spatial}, and extended to the case of heterogeneous coefficients in \cite{boucekkine2019growth}. We thus consider a model of spatial accumulation of capital (capital intended in a broad sense) $K(t,\theta)$ for $\theta \in S^{1}$ where the production function is assumed to be of the form $AK$. 
The dynamic of capital distribution is assumed to satisfy the following PDE:
\begin{equation}\label{eq:optimalProblemIntro}
\begin{cases}
\frac{\partial K}{\partial t}(t,\theta) = \sigma \frac{\partial^{2} K}{\partial \theta^{2}}(t,\theta) + A(\theta)K(t,\theta) - \eta(\theta) C(t,\theta),\\
K(0,\theta) = K_{0}(\theta).
\end{cases}
\end{equation}
We then consider a planner problem, where the planner has to choose the optimal consumption strategy, $C(t,\theta)$, maximizing an intertemporal Benthamite social welfare function for every (exogenous) spatial distribution of population $\eta(\theta)$ and technological level $A(\theta)$:
\begin{equation*}
\int_{0}^{+\infty} e^{-\rho t}\left( \int_{S^{1}} \frac{C(t,\theta)^{1-\gamma}}{1-\gamma}\eta(\theta)^{q}\,d\theta \right)\,dt.
\end{equation*}
The problem is complemented by the state constraint that imposes that capital should remain positive at all times, i.e. we consider the admissible set for the controls as:
\begin{equation}\label{eq:admissibleSetIntro}
\left\{ C \in L^{1}_{loc}(\RR^{+}; E^{+})\,|\, K(t,\theta) \geq 0 \forall (t,\theta) \in \RR^{+}\times S^{1} \right\},
\end{equation}
where $\RR^{+} := [0,+\infty)$ and $E^{+} := \{x \in E\,|\, x(\cdot) \geq 0\}$. Classically, the problem is studied, e.g. in \cite{banachLattice2021},  with a strict state constraint, imposing a strictly positive level of capital $K$. In this work, however, we will consider the case where the state constraint only requires capital to be positive, i.e. $K \geq 0$.
In this setting, there is no state intervention in capital reallocation, meaning that capital can flow from one location to another only by the effect of diffusion. The social planner can only optimize the consumption in every point in space $\theta \in S^{1}$, with the constraint that capital should remain positive at every point. 

Notice that the admissible set defined in Equation \eqref{eq:admissibleSetIntro} imposes a local constraint on the level of capital at every point in space. Differently, the relaxed state constraint that has been considered in the literature imposes the level of capital should remain positive in an aggregate sense: 
\begin{equation}\label{eq:admissibleSetAuxIntro}
\left\{ C \in L^{1}_{loc}(\RR^{+}; E^{+})\,|\, \ang{K(t,\cdot),b_{0}^{*}} \geq 0 \forall t \in \RR^{+} \right\},
\end{equation} 
where $b_{0}^{*}$ is a suitable eigenfunction of the Sturm-Liouville operator and $\ang{f,g} := \int_{S^{1}}f(\theta)g(\theta)\,d\theta$ denotes the usual inner product of $L^{2}$. This leads to an optimal consumption strategy on the admissible set defined in Equation \eqref{eq:admissibleSetAuxIntro} given by (see \cite[Theorem 5.7, Equation (5.14)]{banachLattice2021}): 
\begin{equation}\label{eq:optimalConsumptionAuxIntro}
C^{*}(t,\theta) := \eta(\theta)^{\frac{q-1}{\gamma}} \left(\alpha b_{0}(\theta)\right)^{-\frac{1}{\gamma}}  \int_{S^{1}}K(t,\xi)b_{0}(\xi)\,d\xi.
\end{equation}
Notice that the latter optimal consumption policy is not heterogeneous with respect to the level of capital $K$: the heterogeneity across space only depends on the population density $\eta(\cdot)$ and other exogenous factors. This, in particular, implies that the optimal consumption would be equal across space, in the absence of other exogenous sources of heterogeneity.
In the original works \cite{boucekkine2013spatial,boucekkine2019growth}, the authors relied only on a numerical verification of the admissibility of the solution of the auxiliary problem for the original one. Moreover, they also showed the admissibility of the optimal path of an auxiliary problem when the initial condition is proportional to the steady state of the problem. A remarkable step forward is provided in \cite[Corollary 5.14]{banachLattice2021} that is, up to our knowledge, the first result where the authors rigorously proved the admissibility of the auxiliary optimal path for a set of initial conditions that is sufficiently rich. In particular, the authors show that when the initial condition is close (not only proportional) to the steady state of Equation \eqref{eq:optimalProblemIntro}, then the corresponding optimal path is admissible for the original problem. Another fundamental contribution is given by \cite[Theorem 5.7]{calvia2023optimal}, where the authors study the problem in finite dimension, and show that there exist a certain configuration of parameters for which all strictly positive solution will remain positive at all times. However, the technique to show such a result is based on the properties of Metzler matrices, which ensures positivity for linear, but only finite-dimensional, dynamical systems. 

In this work, we investigate the positivity of solutions for Equation \eqref{eq:optimalProblemIntro}, with consumption given by Equation \eqref{eq:optimalConsumptionAuxIntro}. 
Most of the results concerning invariance for infinite-dimensional problems provide sufficient conditions for the invariance, see among others \cite{martin1973differential,pavel1977invariant,pavel1983semilinear,shuzhong1989viability}. 
The only results concerning necessary conditions are present in \cite[Theorem 8.1.1]{carja2007viability}, and more recently in \cite[Theorem 4.2]{cannarsa2018invariance}. In this manuscript, we are interested in showing that if the initial condition of Equation \eqref{eq:optimalProblemIntro} lies in the positive cone, that is we assume that $K_{0}(\theta) \geq 0$,  then solutions may become negative over time. In particular in Theorems \ref{teo:noninvarianceL2} and \ref{teo:noninvarianceC} we show, in an $L^{2}(S^{1})$ and $C(S^{1})$ setting respectively, that for any given configuration of parameters, there exists a non-negative initial condition $K_0(\theta)$ which leads to negative solutions for Equations \eqref{eq:optimalProblemIntro}-\eqref{eq:optimalConsumptionAuxIntro}. 
This highlights that the solution to the auxiliary problem is not admissible to the original one, and suggests that under certain initial configurations, the optimal consumption strategy must depend on the level capital $K(t,\theta)$ in each location, and not only by its aggregate level as prescribed by Equation \eqref{eq:optimalConsumptionAuxIntro}.  The invariance condition introduced in \cite{cannarsa2018invariance} is the main tool that we will use since the necessity of such a condition allows us to show our converse results. 

This paper is organized as follows: in Section \ref{sec:optimalcontrol} we introduce Equation \eqref{eq:optimalProblemIntro} as the solution to the auxiliary control problem. In Section \ref{sec:noninvariance} we state and prove our main results, that is the non-invariance of the positive cone in the Banach spaces $L^{2}(S^{1})$ and $C(S^{1})$. In Section \ref{sec:numerical} we investigate numerically the problem, and finally in Section \ref{sec:conclusion} we conclude by discussing possible extensions to our approach to the cases of the strictly positive cone. 

\section{The Optimal Control Problem}\label{sec:optimalcontrol}
In this section, we motivate the analysis of the positivity of solutions for Equation \eqref{eq:optimalProblemIntro}. 

\subsection{Preliminaries}
We start by introducing all the notations used in this manuscript. \\
Consider the set $S^{1}$ as the interval $[0,1]$ with identified endpoints: for every $x\in\RR$ we consider the quantity $x\pmod{1}\in[0,1)$ (that is the fractional part of $x$); positions in $[0,1)$ are in one-to-one correspondence with positions in $S^1$. 
To measure distances consistently on the torus we define, for every $x,y\in[0,1)$, the distance
\begin{equation*}
  d_{S^1}(x,y):=\min\left\{(x-y)\pmod{1},(y-x)\pmod{1}\right\}\ ;
\end{equation*}
this clearly corresponds to the arc-length distance on $S^1$ assuming that $S^1$ has total length $1$. Functions defined on $S^{1}$ can therefore be seen as one variable functions which are periodic of period $1$. Every algebraic and measure-related operation is therefore intended to be implicitly inherited from the structure over $\RR$. 
We consider our optimal control problem in the Banach spaces $E = L^{2}(S^{1};\RR)$ equipped with its usual norm $\norm{\cdot}_{2}$, and $E = C(S^{1};\RR)$ that is the space of real-valued continuous functions on $S^{1}$ equipped with the sup-norm  $\norm{\cdot}_{\infty}$. These two cases fall in the framework developed in \cite{boucekkine2019growth} and \cite{banachLattice2021} respectively, even though the latter is more general since it encompasses also the case $E = L^{2}(S^{1};\RR)$ by using the formalism of Banach lattices. 
We also introduce the notation $\ang{\cdot,\cdot}$ for the usual inner product of $L^{2}(S^{1};\RR)$, i.e. 
\begin{equation*}
\ang{f,g} := \int_{S^{1}}f(\theta)g(\theta) \,d\theta,\quad \forall f,g \in L^{2}(S^{1};\RR).
\end{equation*}
Consider therefore the \emph{positive cone} in $E$:
\begin{equation*}
E^{+} := \{x \in E\,|\, x(\cdot) \geq 0\},
\end{equation*}
and the \emph{strictly positive cone}: 
\begin{equation*}
E^{++} := \{x \in E\,|\, x(\cdot) > 0\}.
\end{equation*}
Moreover, we define $\RR^{+} = [0,+\infty)$ and $\RR^{++} = (0,+\infty)$ the sets of positive and strictly positive real numbers respectively.  

\subsection{The planner control problem} 
Consider the state variable is the level of capital $K(t,\theta)$ at time $t \in \RR^{+}$ and location $\theta \in S^{1}$. The evolution of capital over time is governed by the following partial differential equation:
\begin{equation}\label{eq:stateEquation}
\begin{cases}
\frac{\partial K}{\partial t}(t,\theta) = \sigma \frac{\partial^{2} K}{\partial \theta^{2}}(t,\theta) + A(\theta)K(t,\theta) - \eta(\theta) C(t,\theta),\\
K(0,\theta) = K_{0}(\theta),
\end{cases}
\end{equation}
with initial condition $K_{0}\in E^{+}$.
The constant $\sigma > 0$ measures the speed of reallocation of capital over space. 
Functions $A: S^{1}\to \RR^{++}$ and $\eta: S^{1}\to \RR^{++}$ represent the exogenous technological level and the exogenous population density respectively, which are assumed to be constant over time, space continuous, and heterogeneous between different locations $\theta \in S^{1}$.\footnote{Functions $A(\cdot)$ and $\eta(\cdot)$ are assumed to belong to $C(S^{1};\RR^{++})$. However, in the case $E = L^{2}(S^{1};\RR^{++})$ one does not need to require continuity, but only boundedness, i.e. $A,\eta \in L^{\infty}(S^{1};\RR^{++})$. }

Introduce the linear operator $L : D(L) \subseteq E \to E$ defined as
\begin{equation*}
(Lf)(\theta) = \sigma \frac{d^{2} f}{d \theta^{2}}(\theta) + A(\theta)f(\theta)
\end{equation*}
with $D(L) = C^{2}(S^{1};\RR)$.  One can check that, in the case $E = L^{2}(S^{1};\RR)$ or $E = C(S^{1};\RR)$ the operator $L$ is closed, densely defined and generates a $C_{0}$-semigroup $e^{tL}$ for $t \geq 0$ (see \cite{arendt1986one}). 

This allows us to define \emph{mild solutions} to Equation \eqref{eq:stateEquation}. In particular for any $K_{0}\in E$ and for any $C \in L^{1}_{loc}(R^{+};E^{+})$ we say that $K(t,\theta)$ is a mild solution to Equation \eqref{eq:stateEquation} if we can write
\begin{equation*}
K(t,\theta) = (e^{tL}K_{0})(\theta) - \int_{0}^{t}\left(e^{(t-s)L}\eta(\cdot)\,C(s,\cdot)\right)(\theta)\,ds \quad \text{ for all }t \geq 0.
\end{equation*}

The function $C(t,\theta)$ is the endogenous capital per capita consumption and has to be determined optimally. The objective functional the social planner aims to maximize is 
\begin{equation*}
J(C) :=  \int_{0}^{+\infty} e^{-\rho t}\left( \int_{S^{1}} \frac{C(t,\theta)^{1-\gamma}}{1-\gamma}\eta(\theta)^{q}\,d\theta \right)\,dt,
\end{equation*}
where $\rho > 0$ is the discount factor and the parameters $\gamma \in \RR^{++}\setminus\{1\}$ and $q \geq 0$ measure the intensity of risk aversion of the social planner.
The corresponding Value Function is:
\begin{equation*}
V_{+}(K_{0}) := \sup_{C \in \mathcal{A}_{+}(K_{0})} J(C),
\end{equation*} 
where the maximization of the objective ranges over the set of positive consumption strategies that maintain a positive level of capital in every location, i.e. we define the set
\begin{equation*}
\mathcal{A}_{+}(K_{0}) := \left\{ C \in L^{1}_{loc}(\RR^{+}; E^{+})\,|\, K(t,\theta) \geq 0 \forall (t,\theta) \in \RR^{+}\times S^{1} \right\}.
\end{equation*}
In what follows we are interested in the solution to the optimal control problem:
\begin{equation}\tag{P}\label{eq:problemP}
\text{maximize } J(C) \,\text{over the set} \,\mathcal{A}_{+}(K_{0})\, \text{under state equation}\, \eqref{eq:stateEquation}.
\end{equation}

\subsection{The auxiliary control problem}
In this section, we introduce the \emph{auxiliary control problem} associated with problem \eqref{eq:problemP}. This problem is characterized by a relaxed state constraint, namely we do not require the positivity of capital $K(t,\theta)$ at every period $t$.

Denote by $L^{*} : D(L^{*})\subseteq E^{*}\to E^{*}$  the adjoint operator of $L$. In particular, following \cite[Theorem 4.4]{banachLattice2021}, one can show that $L^{*}$ admits a simple eigenvalue $\lambda_{0}^{*}$ and an associated eigenvector $b_{0}^{*}$, such that 
\begin{equation*}
b_{0}^{*} \in E^{*}_{++} := \{\phi^{*} \in E^{*} \,|\, \ang{\phi^{*},f}  > 0 \,\forall f \in E^{+},f\not\equiv 0 \},
\end{equation*}
where $\ang{\cdot,\cdot}$ denotes the usual duality product. In the case $E = L^{2}(S^{1};\RR)$ we can easily identify by duality $b_{0}^{*}$ with a positive function in $L^{2}(S^{1};\RR)$. 

In the case $E = C(S^{1};\RR)$ the space $E^{*}$ is the space of Radon measures with bounded variation. The eigenvector $b_{0}^{*}$, therefore, is a strictly positive measure over the Borel sets of $S^{1}$, which is absolutely continuous with respect to the Lebesgue measure (see \cite[Proposition 5.6 and following sentences]{banachLattice2021}). Therefore, with a little abuse of notation, we will still denote by $b_{0}^{*}$ the corresponding density function. However, it is important for the well-posedness of our problem that $b_{0}^{*}$ is represented by a function belonging to $E$ itself, i.e. by a continuous function. This, however, is ensured by \cite[Proposition 5.6]{banachLattice2021} which shows that $b_{0}^{*}$ can be identified (up to a multiplicative constant) with the first eigenvector of the operator $L$, $b_{0}\in C(S^{1};\RR^{++})$ with corresponding eigenvalue $\lambda_{0}>0$, thus ensuring continuity.

We can now formulate the auxiliary problem. Introduce
\begin{equation}
J(C) := \int_{0}^{+\infty} e^{-\rho t}\left( \int_{S^{1}} \frac{C(t,\theta)^{1-\gamma}}{1-\gamma}\eta(\theta)^{q}\,d\theta \right)\,dt,
\end{equation}
where the feasible set is defined as:
\begin{equation}
\mathcal{A}^{b_{0}^{*}}_{+}(K_{0}) := \left\{ C \in L^{1}_{loc}(\RR^{+}; E^{+})\,|\, \ang{K(t,\cdot),b_{0}^{*}} \geq 0 \forall t \in \RR^{+} \right\},
\end{equation}
and the corresponding Value Function is: 
\begin{equation*}
V^{b_{0}^{*}}(K_{0}) := \sup_{C \in \mathcal{A}^{b_{0}^{*}}_{+}(K_{0})}J(C).
\end{equation*} 
We, therefore, define the auxiliary control problem as: 
\begin{equation}\tag{P aux}\label{eq:problemPauxiliary}
\text{maximize } J(C)\,\text{over the set}\,\mathcal{A}^{b_{0}^{*}}_{+}(K_{0})\, \text{under state equation}\, \eqref{eq:stateEquation}.
\end{equation}
\begin{rem}
It holds $\mathcal{A}_{+}(K_{0}) \subseteq \mathcal{A}^{b_{0}^{*}}_{+}(K_{0})$ since the constraint required by problem \eqref{eq:problemP} is more strict. Therefore, the corresponding value functions satisfy $V^{b_{0}^{*}}(K_{0}) \geq V_{+}(K_{0})$. 
\end{rem}
\begin{rem}
If the control $C^{*}$ is optimal for Problem \eqref{eq:problemPauxiliary} and it also satisfies the constraint of Problem \eqref{eq:problemP}, i.e. $C^{*}  \in A_{+}(K_{0})$, then it is optimal also for \eqref{eq:problemP}, and it holds $V^{b_{0}^{*}}(K_{0}) = V_{+}(K_{0})$.
\end{rem}

As previously stated, under the assumption that $\rho > \lambda_{0}(1-\gamma)$, problem \eqref{eq:problemPauxiliary} admits an explicit form for the optimal control, given by (see \cite[Theorem 5.7]{banachLattice2021})
\begin{equation}\label{eq:optimalConsumptionAux}
C^{*}(t,\theta) := \eta(\theta)^{\frac{q-1}{\gamma}} \left(\alpha b_{0}(\theta)\right)^{-\frac{1}{\gamma}}  \int_{S^{1}}K(t,\xi)b_{0}(\xi)\,d\xi,
\end{equation}
where 
\begin{equation*}
\alpha := \left(\frac{\gamma}{\rho - \lambda_{0}(1-\gamma)}\int_{S^{1}}\eta(\theta)^{\frac{q+\gamma-1}{\gamma}}b_{0}(\theta)^{\frac{\gamma-1}{\gamma}}\,d\theta \right)^{\gamma}.
\end{equation*}
The optimal capital distribution $K(t,\theta)$ for problem \eqref{eq:problemPauxiliary}  therefore solves
\begin{equation}\label{eq:optimalProblem}
\begin{cases}
\frac{\partial K}{\partial t}(t,\theta) = \sigma \frac{\partial^{2} K}{\partial \theta^{2}}(t,\theta) + A(\theta)K(t,\theta) - \eta(\theta)^{\frac{q+\gamma-1}{\gamma}}\left(\alpha b_{0}(\theta)\right)^{-\frac{1}{\gamma}}  \int_{S^{1}}K(t,\xi)b_{0}(\xi)\,d\xi,\\
K(0,\theta) = K_{0}(\theta).
\end{cases}
\end{equation}
\begin{rem}
Equation \eqref{eq:optimalProblem} can also be rewritten by using the linear operator $F:D(F) \subseteq E \to E$ defined for $\theta \in S^{1}$ as 
\begin{equation}\label{eq:operatorF}
(Fx)(\theta) := \sigma \frac{\partial^{2}x}{\partial \theta^{2}}(\theta) + A(\theta)x(\theta) 
-\eta(\theta)^{\frac{q+\gamma-1}{\gamma}} \left(\alpha b_{0}(\theta)\right)^{\frac{1}{\gamma}}\left( \int_{S^{1}} x(\xi)b_{0}(\xi)\, d\xi \right)\eta(\theta)^{\frac{q+\gamma-1}{\gamma}}, 
\end{equation}
as a differential equation in Banach spaces
\begin{equation}\label{eq:optimalProblemF}
\begin{cases}
K'(t) = FK(t),\\
K(0) = K_{0}.
\end{cases}
\end{equation}
\end{rem}
\begin{rem}
The optimal control policy can also be expressed as a function only of the initial condition $K_{0}(\theta)$ as 
\begin{equation*}
C^{*}(t,\theta) := \eta(\theta)^{\frac{q-1}{\gamma}}\left(\alpha b_{0}(\theta)\right)^{-\frac{1}{\gamma}}  \int_{S^{1}}K_{0}(\xi)\xi)b_{0}(\xi)\,d\xi \cdot e^{gt}
\end{equation*}
where $g := (\lambda_{0}-\rho)/\gamma$ is the optimal aggregate growth rate.
\end{rem}
\begin{rem}
The optimal consumption given by \eqref{eq:optimalConsumptionAux} has the notable property of being nonlocal and space independent with respect to $K(t,\theta)$. The only space heterogeneity in the optimal consumption strategy is due to the heterogeneity of $\eta(\theta)$ and $b_{0}(\theta)$ which however are constant over time and independent of $K(t,\theta)$. This space independence on the level of capital is crucial to show that in certain regimes the optimal capital distribution may become negative, thus showing the non-optimality of the consumption strategy in certain scenarios. 
\end{rem}

Referring to \cite[Theorem 5.12]{banachLattice2021}, one can prove that when the initial condition $K_{0}$ of \eqref{eq:optimalProblem} is close enough to the steady state solution of the detrended problem, i.e to the limit for large $t$ of $K_{g}(t,\theta):= e^{-gt}K(t,\theta)$, then the solution will remain positive for all $t > 0$. This in particular implies that the control $C^{*}$ is also optimal for Problem \eqref{eq:problemP} since $C^{*}  \in A_{+}(K_{0})$.
The possibility of proving such results has been made possible by developing the theory in $C(S^{1};\RR)$, since the usual $L^{p}$-theory doesn't allow for pointwise estimations. 

However the general case, without any assumptions on the initial condition, remains open. It comes therefore natural to investigate whether or not the solution of equation \eqref{eq:optimalProblem} will remain positive both in the case $E = L^{2}(S^{1};\RR)$ and $E = C(S^{1};\RR)$. This motivates the study of the invariance properties of Equation \eqref{eq:optimalProblem} that we carry out in Section \ref{sec:noninvariance}.

\section{Non-invariance in the positive cone}\label{sec:noninvariance}
In this section, we show that Equation \eqref{eq:optimalProblem} is not invariant over the positive cone $E^{+}$, which is a convex closed set,  in $E = L^{2}(S^{1};\RR)$ and $E = C(S^{1};\RR)$. We will adopt the framework developed in \cite{cannarsa2018invariance} which provides sufficient and necessary conditions for the invariance of the flow of a dynamical system in a convex and closed set in a separable Banach space. Therefore we will reformulate our original problem in the setting of differential equations in abstract vector spaces. 

Consider the Cauchy problem in the separable Banach space $(E,\norm{\cdot})$:
\begin{equation}\label{eq:abstractODEcannarsa}
\begin{cases}
X'(t) = \mathscr{A}X(t) + \mathscr{B}(X(t))\,, t \geq 0,\\
X(0) = x.
\end{cases}
\end{equation} 
Hypothesis $(\textbf{H})$: assume moreover that $\mathscr{A} : D(\mathscr{A}) \subseteq E \to E$ is the infinitesimal generator of a strongly continuous semigroup of contractions, and that $\mathscr{B}:E\to E$ is continuous and quasi-dissipative.

To reformulate our problem, defined in Equation \eqref{eq:optimalProblem}, in this setting we therefore consider (recall the definition of $F$ in Equation \eqref{eq:operatorF}):
\begin{equation*} 
    \mathscr{A} := F - \alpha I  \, \text{ and }\,  \quad  \mathscr{B} := \alpha I,
\end{equation*}
with $\alpha > \max_{\theta \in S^{1}} \abs{A(\theta)}$, so that $\mathscr{A}$ is a dissipative operator, and $\mathscr{B}$ is quasi dissipative so the aforementioned hypothesis on $\mathscr{A}$ and $\mathscr{B}$ are fulfilled (other parts of the hypothesis are readily verified).

\subsection{General strategy}
The key ingredients provided in \cite{cannarsa2018invariance} are the \emph{distance from a closed convex set} $G \subseteq E$:
\begin{equation*}
d_{G}(x) := \inf_{y \in G} \norm{x-y},
\end{equation*}
and the \emph{lower Dini derivative} of a locally Lipschit function $f$ in a point $x \in E$ along direction $v$:
\begin{equation*}
D^{-}f(x)v := \lim_{\lambda \to 0}\frac{f(x+\lambda v)-f(x)}{\lambda}, \quad v \in E.
\end{equation*}
The idea behind how these two elements combine is the following: if one shows that the distance to the candidate invariant set $d_{G}(x)$ along the direction of the dynamic of Equation \eqref{eq:abstractODEcannarsa} $\mathscr{A}x + \mathscr{B}(x)$ decreases exponentially, then one can apply a Grownall type argument to show that the boundary is never reached. This intuition is made feasible to prove the desired result by considering the distance from the complement of the candidate invariant set; introduce 
\begin{equation*}
G_{\delta} := \{x \in E \,|\, d_{G}(x) < \delta\}
\end{equation*}
which is an open set contained in the complement of $G$. 

We refer to \cite[Theorem 4.2]{cannarsa2018invariance}, that reads:
\begin{thm*}[{\cite[Theorem 4.2]{cannarsa2018invariance}}]
Assume $(\mathbf{H})$. Then $G$ is invariant if and only if there exist $\delta > 0$ and $C > 0$ such that
\begin{equation*}
D^{-}d_{G}(x)(\mathscr{A}x+\mathscr{B}(x)) \leq C d_{G}(x) \quad  \forall x \in G_{\delta}.
\end{equation*}
\end{thm*}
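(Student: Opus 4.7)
The plan is to prove both directions of the equivalence: sufficiency via a Gronwall estimate on $d_G(X(t))$ along the flow, and necessity — the crucial direction for the paper's non-invariance application — via comparing the trajectory from $x \in G_\delta$ with the trajectory from a point of $G$ closest to $x$, then exploiting the assumed invariance.

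For sufficiency, I would fix $x \in G$ (so $d_G(x) = 0$) and let $X(t)$ be the mild solution of \eqref{eq:abstractODEcannarsa}. Since mild solutions are not classical, I would regularize through the Yosida approximation $\mathscr{A}_n := n\mathscr{A}(nI - \mathscr{A})^{-1}$, producing classical solutions $X_n(t) \to X(t)$ uniformly on compact intervals. For each $n$, $\varphi_n(t) := d_G(X_n(t))$ is locally Lipschitz, and at almost every $t$ the chain rule combined with the Dini hypothesis on $G_\delta$ gives $\dot\varphi_n(t) \le C\varphi_n(t)$ as long as $\varphi_n(t) < \delta$. Passing $n \to \infty$ using continuity of $d_G$ and $\mathscr{B}$ transfers the same inequality to $d_G(X(t))$, and a stopping-time argument (small $t$ keeps $X(t) \in G_\delta$ by continuity, and Gronwall with initial datum $0$ forces $d_G(X(t)) = 0$, so one can extend indefinitely) yields $X(t) \in G$ for all $t \ge 0$.

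For necessity, I would fix $x \in G_\delta$ and select an approximate nearest point $y \in G$ with $\|x - y\| \le d_G(x) + \varepsilon$; exact projection is avoided since $E$ need not be Hilbert. Let $X(\cdot, x)$ and $X(\cdot, y)$ be the corresponding mild solutions. Invariance of $G$ gives $X(t, y) \in G$, hence $d_G(X(t, x)) \le \|X(t, x) - X(t, y)\|$. The contractivity of $e^{t\mathscr{A}}$ combined with the quasi-dissipativity of $\mathscr{B}$ yields, through a standard Gronwall estimate on the difference of mild solutions, the bound $\|X(t, x) - X(t, y)\| \le e^{Lt}\|x - y\|$ for a suitable constant $L \ge 0$. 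Forming the difference quotient $(d_G(X(t, x)) - d_G(x))/t$, letting $t \to 0^+$ and then $\varepsilon \to 0$, and identifying the direction of the incremental ratio as $\mathscr{A}x + \mathscr{B}(x)$, produces the required Dini estimate with $C := L$.

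The main obstacle will be making rigorous the final limit in the necessity direction: identifying $\lim_{t\to 0^+}(d_G(X(t, x)) - d_G(x))/t$ with $D^- d_G(x)(\mathscr{A}x + \mathscr{B}(x))$ is nontrivial because the one-sided incremental ratio $(X(t, x) - x)/t$ need not converge in $E$ unless $x \in D(\mathscr{A})$, so the heuristic linearization $X(t, x) \approx x + t(\mathscr{A}x + \mathscr{B}(x))$ cannot be used directly. Resolving this requires either a density argument on $x$, exploiting the Lipschitz continuity of $d_G$ to control errors off $D(\mathscr{A})$, or a further Yosida-type regularization passed to the limit jointly with $t \to 0$ — and this is precisely where the separability of $E$ in hypothesis $(\mathbf{H})$, permitting measurable selections for the approximate projections, becomes indispensable.
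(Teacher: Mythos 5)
This statement is not proved in the paper at all: it is quoted verbatim as \cite[Theorem 4.2]{cannarsa2018invariance} and used as an imported black box, so there is no in-paper argument to compare yours against. Judged on its own terms, your two-direction architecture (Gronwall/Yosida for sufficiency, comparison with a trajectory started from a near-projection for necessity) is the standard one for results of this type and is essentially the right reconstruction.

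Two concrete points need repair. First, in the necessity direction your order of limits is wrong as stated: with $y\in G$ satisfying $\|x-y\|\le d_G(x)+\eps$ fixed, the bound you obtain is
\begin{equation*}
\frac{d_G(X(t,x))-d_G(x)}{t}\;\le\;\frac{(e^{Lt}-1)\,d_G(x)}{t}+\frac{e^{Lt}\eps}{t},
\end{equation*}
and letting $t\to 0^{+}$ \emph{before} $\eps\to 0$ sends the last term to $+\infty$. Since the infimum defining $d_G$ need not be attained in a general Banach space, you must make a diagonal choice $\eps=\eps(t)=o(t)$ (e.g.\ $\eps=t^{2}$) so that the error term vanishes in the limit; with that correction the estimate closes. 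Second, the ``main obstacle'' you flag — identifying the incremental ratio when $x\notin D(\mathscr{A})$ — is not actually an obstacle: the inequality $D^{-}d_{G}(x)(\mathscr{A}x+\mathscr{B}(x))\le C d_G(x)$ is only meaningful (and only needs to be established or violated) for $x\in G_{\delta}\cap D(\mathscr{A})$, which is exactly how the present paper uses it in Theorems \ref{teo:noninvarianceL2} and \ref{teo:noninvarianceC}, where the counterexample functions are taken in $G_{\delta}\cap D(\mathscr{A})$. For such $x$ one has $X(t,x)=x+t(\mathscr{A}x+\mathscr{B}(x))+o(t)$ and the $1$-Lipschitz continuity of $d_G$ transfers the bound to the Dini derivative directly; no density argument or measurable-selection machinery is required, and separability enters the cited proof for other reasons. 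Finally, a cosmetic caveat: your bound $\|X(t,x)-X(t,y)\|\le e^{Lt}\|x-y\|$ follows from quasi-dissipativity via the duality-map calculus for mild solutions rather than a plain Gronwall inequality, since $\mathscr{B}$ is only assumed continuous and quasi-dissipative, not Lipschitz.
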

In the next sections, we specialize the analysis to the case $E = L^{2}(S^{1};\RR)$ and $E = C(S^{1};\RR)$.

\subsection{The case $E = L^{2}(S^{1};\RR)$}
In this section we will denote by $\ang{\cdot,\cdot}$ the scalar product in $L^{2}(S^{1};\RR)$. 
In the special case where $E$ is a separable Hilbert space, the aforementioned invariance results can be specialized, by rewriting the distance with respect to a set using the orthogonal projection and by making explicit the lower Dini derivative. In particular, we have that the distance from the set $G$ is given by $d_{G}(x) = \norm{x-\Pi_{G}(x)}_{2}$, where $\Pi_{G}(x)$ is the orthogonal projection over $G$. Moreover one can express the lower Dini derivative as
\begin{equation*}
D^{-}d_{G}(x)v = \ang{\nabla d_{G}(x),v},\quad \text{where }\, \nabla d_{G}(x) = \frac{x-\Pi_{G}(x)}{d_{G}(x)}, x \in E\setminus G.
\end{equation*}
In the particular case where $G = E^{+}$ one also has 
\begin{equation*}
\Pi_{G}(x) = x^{+},\quad x-\Pi_{G}(x) = -x^{-},\quad d_{G}(x) = \norm{x^{-}}_{2},
\end{equation*}
by denoting with $x^{+}$ and $x^{-}$ the standard positive and negative part. 
It is, therefore, possible to reformulate \cite[Theorem 4.2]{cannarsa2018invariance} in a simpler way, see \cite[Corollary 4.4]{cannarsa2018invariance}. Since we are interested in showing that $E^{+}$ is not invariant for the flow of Equation \eqref{eq:optimalProblemF}, we are interested in the converse result, which reads:
\begin{prop}[{Converse of \cite[Corollary 4.4]{cannarsa2018invariance}}]
\label{prop:ConverseCor44Cannarsa}
Assume hypothesis $(\mathbf{H})$. Then the positive cone $G = E^{+}$ is not invariant if and only if for all $\delta,C > 0$ there exist $x \in G_{\delta}$ such that 
\begin{equation*}
-\ang{x^{-},\mathscr{A}x + \mathscr{B}(x)} > C \norm{x^{-}}_{2}^{2}.
\end{equation*}
\end{prop}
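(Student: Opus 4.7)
The plan is to obtain the proposition as the logical contrapositive of Corollary 4.4 in \cite{cannarsa2018invariance}, after specializing the Dini derivative and the distance function to the Hilbert-space setting $E = L^{2}(S^{1};\RR)$ and the convex set $G = E^{+}$. The theorem of Cannarsa et al.\ recalled above states that invariance of $G$ is equivalent to the existence of constants $\delta,C>0$ such that $D^{-}d_{G}(x)(\mathscr{A}x+\mathscr{B}(x)) \leq C\,d_{G}(x)$ for every $x\in G_{\delta}$. Consequently, the negation of invariance is exactly: for every $\delta,C>0$ there exists $x\in G_{\delta}$ with $D^{-}d_{G}(x)(\mathscr{A}x+\mathscr{B}(x)) > C\,d_{G}(x)$. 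The whole work consists therefore in showing that this last inequality, once rewritten using the explicit Hilbert-space formulas, coincides with $-\ang{x^{-},\mathscr{A}x+\mathscr{B}(x)} > C\norm{x^{-}}_{2}^{2}$.

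First I would record the three identities already stated in the preceding paragraph of the paper: $\Pi_{G}(x)=x^{+}$, $x-\Pi_{G}(x)=-x^{-}$, and $d_{G}(x)=\norm{x^{-}}_{2}$, all valid for $x\in E\setminus G$ (i.e.\ whenever $x^{-}\neq 0$). These are standard facts about the orthogonal projection onto the positive cone in $L^{2}$, since $x = x^{+}-x^{-}$ with $\ang{x^{+},x^{-}}=0$, and $x^{+}$ is the unique minimizer in $E^{+}$ of $\norm{x-y}_{2}$. I would cite the Moreau decomposition or prove it in one line by comparing $\norm{x-y}_{2}^{2}$ with $\norm{x-x^{+}}_{2}^{2}=\norm{x^{-}}_{2}^{2}$ for arbitrary $y\in E^{+}$.

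Next I would substitute these identities into the Hilbert-space expression for the Dini derivative recalled above, namely $D^{-}d_{G}(x)v = \ang{\nabla d_{G}(x),v}$ with $\nabla d_{G}(x) = (x-\Pi_{G}(x))/d_{G}(x)$ for $x\in E\setminus G$. This yields
\begin{equation*}
D^{-}d_{G}(x)(\mathscr{A}x+\mathscr{B}(x)) = -\frac{\ang{x^{-},\mathscr{A}x+\mathscr{B}(x)}}{\norm{x^{-}}_{2}},
\end{equation*}
so that the condition $D^{-}d_{G}(x)(\mathscr{A}x+\mathscr{B}(x)) \leq C d_{G}(x)$ is equivalent, for $x\in G_{\delta}\setminus G$, to
\begin{equation*}
-\ang{x^{-},\mathscr{A}x+\mathscr{B}(x)} \leq C\norm{x^{-}}_{2}^{2}.
\end{equation*}
Note that for $x\in G_\delta\cap G$ both sides vanish, hence the inequality is trivially satisfied there, and can be ignored in the negation; this is the only point requiring a small caveat.

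Finally, taking the logical negation of the Cannarsa--Frankowska--Scarinci characterization specialized in this form, one obtains precisely the statement of Proposition \ref{prop:ConverseCor44Cannarsa}: non-invariance of $E^{+}$ is equivalent to the existence, for every $\delta,C>0$, of some $x\in G_{\delta}$ (necessarily with $x^{-}\neq 0$) for which $-\ang{x^{-},\mathscr{A}x+\mathscr{B}(x)} > C\norm{x^{-}}_{2}^{2}$. There is no substantial obstacle: the only point worth emphasizing is the computation of $\nabla d_{G}(x)$ in the direction $-x^{-}/\norm{x^{-}}_{2}$ and the check that $x^{-}\neq 0$ can be assumed without loss of generality when negating the invariance criterion, since the non-invariance of $G$ rules out $x^{-}\equiv 0$ trivially satisfying the inequality on all of $G_{\delta}$.
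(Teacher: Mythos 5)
Your argument is correct and takes essentially the same route as the paper, which states this proposition without a separate proof, presenting it as the direct logical negation of \cite[Corollary 4.4]{cannarsa2018invariance} once the Hilbert-space identities $\Pi_{G}(x)=x^{+}$, $d_{G}(x)=\norm{x^{-}}_{2}$ and $D^{-}d_{G}(x)v=\ang{\nabla d_{G}(x),v}$ recorded in the preceding paragraph are substituted in --- exactly the computation you spell out. Your caveat about points of $G_{\delta}\cap G$, where $x^{-}=0$ makes the strict inequality impossible so that any witness of non-invariance automatically lies outside $G$, is a correct and useful clarification of a point the paper glosses over (minor note: the cited authors are Cannarsa, Da~Prato and Frankowska).
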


\begin{thm}\label{teo:noninvarianceL2}
The positive cone $G = E^{+}$ is not invariant for equation \eqref{eq:optimalProblem} in $E = L^{2}(S^{1};\RR)$.
\end{thm}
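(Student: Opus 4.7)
The plan is to apply Proposition \ref{prop:ConverseCor44Cannarsa}. Since $\mathscr{A}x + \mathscr{B}(x) = (F - \alpha I)x + \alpha x = Fx$, it suffices to exhibit, for every fixed $\delta, C > 0$, an element $x \in G_{\delta}$ such that $-\ang{x^{-}, Fx} > C\norm{x^{-}}_{2}^{2}$. The driving intuition is the one anticipated in the last remark of Section \ref{sec:optimalcontrol}: the nonlocal term in $F$, namely $-\eta(\theta)^{(q+\gamma-1)/\gamma}(\alpha b_{0}(\theta))^{-1/\gamma}\ang{x, b_{0}}$, is subtracted pointwise with a coefficient that depends only on $\theta$ and not on $x$. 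A profile consisting of a very large positive bump together with a disjoint tiny negative dip will therefore see the dip pushed further down, because the large aggregate $\ang{x, b_{0}}$ is consumed uniformly in space.

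Concretely, fix $\delta, C > 0$ and pick two smooth nonnegative bumps $\phi, \psi \in C^{\infty}(S^{1})$ supported on strictly disjoint arcs, rescaled so that $\norm{\phi}_{2} < \delta$ and $\int_{S^{1}} \psi b_{0}\,d\theta > 0$. Set $x := M\psi - \phi$ for a parameter $M > 0$ to be chosen. Disjointness of supports gives $x^{+} = M\psi$ and $x^{-} = \phi$, so $x \in G_{\delta}$ and $\norm{x^{-}}_{2}^{2} = \norm{\phi}_{2}^{2}$ both independently of $M$; smoothness guarantees $x \in D(F)$. Integration by parts on $S^{1}$ and the vanishing of all cross terms between $\phi$ and $\psi$ then yield
\begin{equation*}
-\ang{\phi, Fx} = -\sigma \norm{\phi'}_{2}^{2} + \int_{S^{1}} A\phi^{2}\,d\theta + \Bigl(M \int_{S^{1}} \psi b_{0}\,d\theta - \int_{S^{1}} \phi b_{0}\,d\theta\Bigr) I_{\phi},
\end{equation*}
where $I_{\phi} := \int_{S^{1}} \phi(\theta) \eta(\theta)^{(q+\gamma-1)/\gamma}(\alpha b_{0}(\theta))^{-1/\gamma}\,d\theta > 0$ depends only on $\phi$. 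The right-hand side grows linearly in $M$ with strictly positive leading coefficient $I_{\phi} \int_{S^{1}}\psi b_{0}\,d\theta$, while all remaining terms are bounded uniformly in $M$.

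Choosing $M$ large enough therefore produces $-\ang{\phi, Fx} > C\norm{\phi}_{2}^{2} = C\norm{x^{-}}_{2}^{2}$, which is precisely the condition required by Proposition \ref{prop:ConverseCor44Cannarsa}; this proves non-invariance of $E^{+}$. The only genuine content of the argument lies in the \emph{construction}, namely in the realization that the nonlocal consumption term creates a linear-in-$M$ obstruction to positivity while the negative part has fixed $L^{2}$-norm; everything else reduces to a routine sign check once $\phi$ and $\psi$ are chosen with disjoint supports and enough smoothness to put $x$ in $D(F)$. Conceptually this is a quantitative realization of the fact that the auxiliary optimal consumption \eqref{eq:optimalConsumptionAux} is insensitive to local scarcity, so that a sufficiently large positive aggregate triggers consumption large enough to drive already-negative regions further into the negative.
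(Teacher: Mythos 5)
Your proof is correct and follows the same strategy as the paper's: apply the converse invariance criterion of Proposition \ref{prop:ConverseCor44Cannarsa} and exhibit a witness whose negative part is small and fixed in $L^{2}$ while the positive aggregate $\ang{x,b_{0}}$ is made arbitrarily large, so that the nonlocal consumption term dominates every $M$-independent contribution. The only difference is cosmetic: the paper glues a negative parabola to quintic polynomials rising to a free constant $C_{*}$, whereas your disjointly supported bumps $M\psi-\phi$ achieve the same effect with cleaner bookkeeping, since the cross terms vanish outright.
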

\begin{proof}
We use Proposition \ref{prop:ConverseCor44Cannarsa}, that is we have to show that 
\begin{equation*}
\forall C,\delta > 0\, \exists f  \in G_{\delta}\cap D(\mathscr{A}) \text{ (depending on $C$ and $\delta$) such that } 
\end{equation*}
\begin{equation*}
-\ang{f^{-},\mathscr{A}f + \mathscr{B}(f)} > C \norm{f^{-}}_{2}^{2}.
\end{equation*}
In our case, we need to prove that such an $f$ satisfies 
\begin{multline*}
-\int_{S^{1}}f^{-}(\theta)\left[\sigma \frac{\partial^{2}f}{\partial \theta^{2}}(\theta) + A(\theta)f(\theta) 
- \eta(\theta)^{\frac{q+\gamma-1}{\gamma}}\left(\alpha b_{0}(\theta)\right)^{-\frac{1}{\gamma}}\int_{S^{1}} f(\xi)b_{0}(\xi)\, d\xi \right] \, d\theta  \\ > C \int_{S^{1}}\abs{f^{-}(\theta)}^2\,d \theta.
\end{multline*}
We can study each term on the left-hand side of the previous inequality separately: the first one can be simplified as follows:
\begin{multline*}
-\sigma\int_{S^{1}}f^{-}(\theta) \frac{\partial^{2}f}{\partial \theta^{2}}(\theta)\,d\theta = \\
-\sigma\int_{S^{1}}f^{-}(\theta) \frac{\partial^{2}}{\partial \theta^{2}}[f^{+}(\theta)-f^{-}(\theta)]\,d\theta = \\
\sigma \int_{S^{1}}\frac{\partial}{\partial \theta}f^{-}(\theta)\frac{\partial}{\partial \theta}f^{+}(\theta)\,d\theta - \sigma \int_{S^{1}}\abs{\frac{\partial}{\partial \theta}f^{-}(\theta)}^{2}\,d\theta=\\
- \sigma \int_{S^{1}}\abs{\frac{\partial}{\partial \theta}f^{-}(\theta)}^{2}\,d\theta,
\end{multline*}
since the supports of $f^{+}$ and $f^{-}$ are disjoint.
The second term is rewritten as: 
\begin{multline*}
-\int_{S^{1}}A(\theta)f^{-}(\theta)f(\theta)\,d\theta = 
-\int_{S^{1}}A(\theta)f^{-}(\theta)[f^{+}(\theta)-f^{-}(\theta)]\,d\theta = \\ = 
\int_{S^{1}}A(\theta)\abs{f^{-}(\theta)}^{2}\,d\theta \geq \underline{A}\int_{S^{1}}\abs{f^{-}(\theta)}^{2}\,d\theta,
\end{multline*}
where $\underline{A}:= \inf_{\theta \in S^{1}}A(\theta)$.
Finally, for the third term, introduce the notation:
\begin{equation}\label{eq:defPsi}
\psi(\theta):= \eta(\theta)^{\frac{q+\gamma-1}{\gamma}}\left(\alpha b_{0}(\theta)\right)^{-\frac{1}{\gamma}},
\end{equation}
which is a strictly positive bounded continuous function, by the assumptions on $\eta(\cdot)$ and the results on $b_{0}(\cdot)$.
Therefore we have 
\begin{equation*}
\psi(\theta) \int_{S^{1}}f^{-}(\theta)\int_{S^{1}}f(\xi)b_{0}(\xi)\,d\xi\,d\theta \geq
\underline{\psi}\, \underline{b_{0}}\int_{S^{1}}f^{-}(\theta)\,d\theta\int_{S^{1}}f(\theta)\,d\theta,
\end{equation*}
where $\underline{\psi}:= \inf_{\theta \in S^{1}}\psi(\theta)$, and $\underline{b_{0}}:= \inf_{\theta \in S^{1}}b_{0}(\theta)$.

Summarizing, we have to show that $\forall C,\delta > 0$ there exists an $f \in G_{\delta} \cap D(A)$ such that
\begin{multline*}
-\sigma  \int_{S^{1}}\abs{\frac{\partial}{\partial \theta} f^{-}(\theta)}^2\,d\theta + \underline{A}\int_{S^{1}}\abs{ f^{-}(\theta)}^2\,d\theta + \underline{\psi}\, \underline{b_{0}} \int_{S^{1}}f^{-}(\theta)\,d\theta\int_{S^{1}}f(\theta)\,d\theta > \\  C \int_{S^{1}}\abs{ f^{-}(\theta)}^2\,d\theta,
\end{multline*}
or equivalently
\begin{equation*}
(C-\underline{A})\int_{S^{1}}\abs{f^{-}(\theta)}^2\,d\theta + \sigma \int_{S^{1}}\abs{\frac{\partial}{\partial \theta} f^{-}(\theta)}^2\,d\theta < \underline{\psi}\, \underline{b_{0}} \int_{S^{1}}f^{-}(\theta)\,d\theta\int_{S^{1}}f(\theta)\,d\theta.
\end{equation*}
In the previous one can understand why given $C$ and $\delta$ one can construct a function $f$ that satisfies the inequality: on the left-hand side, there are only terms related to the function $f^{-}$, while on the right-hand side, it appears the expression $\int f$. This suggests that if one can make  $\int f$ arbitrarily large while leaving unchanged $f^{-}$ then the inequality can be satisfied. 

\begin{figure}
\centering
\includegraphics[width=0.75\textwidth]{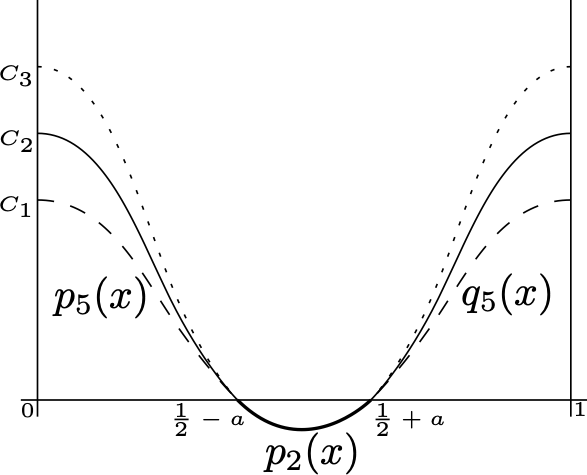}
\caption{Figure of the possible functions $f$ used in the proof for different values of the constant $C_{*}$.}
\label{fig:picTheoCounterExample}
\end{figure}

One possible example can be obtained in the following way, referring to Figure \ref{fig:picTheoCounterExample}. For any $C$ and $\delta$ we define $f$ piecewise in the intervals $[0,1/2-a], (1/2-a,1/2+a)$ and $[1/2+a,1)$, where the constant $a$ will be chosen later, as 
\begin{equation*}
f(x) = \begin{cases}
p_{5}(x) \text{ for } x \in [0,1/2-a],\\
p_{2}(x) \text{ for } x \in [1/2-a,1/2+a],\\
q_{5}(x) \text{ for } x \in [1/2+a,1].
\end{cases}    
\end{equation*}
Now take $a = \left(\frac{\sqrt{15}}{8}\delta\right)^{\frac{2}{5}}$, and define the middle piece $p_{2}(x)$ as
\begin{equation*}
p_{2}(x) = (x-\frac{1}{2}+a)(x-\frac{1}{2}-a).
\end{equation*}
Notice that by the choice of $a$ on has $\norm{p_{2}}_{L^{2}((1/2-a,1/2+a))} = \delta/2<\delta$.

The functions $p_{5}$ and $q_{5}$ are instead defined implicitly as polynomials of degree five, satisfying the following conditions:
\begin{equation*}
\begin{cases}
p_{5}(\frac{1}{2}-a) = 0\\
p^{'}_{5}(\frac{1}{2}-a) = p^{'}_{2}(\frac{1}{2}-a)\\
p^{''}_{5}(\frac{1}{2}-a) = p^{''}_{2}(\frac{1}{2}-a)\\
p_{5}(0) = C_{*}\\
p^{'}_{5}(0) = 0\\
p^{''}_{5}(0) = 0\\
\end{cases},\,
\begin{cases}
q_{5}(\frac{1}{2}+a) = 0\\
q^{'}_{5}(\frac{1}{2}+a) = p^{'}_{2}(\frac{1}{2}+a)\\
q^{''}_{5}(\frac{1}{2}+a) = p^{''}_{2}(\frac{1}{2}+a)\\
q_{5}(1) = C_{*}\\
q^{'}_{5}(1) = 0\\
q^{''}_{5}(1) = 0\\
\end{cases},
\end{equation*}
where the choice of $C_{*} > 0$ is arbitrary. Such functions $p_{5}$ and $q_{5}$ must exist since they are defined by six independent conditions on the coefficients. 
The resulting function $f$ so constructed belongs to $C^{2}(S^{1};\RR)$, and satisfies 
\begin{equation*}
f^{-}(x) = \begin{cases}
-p_{2}(x)\quad &\text{ for } x \in [1/2-a,1/2+a],\\
0 \quad &\text{ otherwise, }
\end{cases}
\end{equation*}
so that it holds $\norm{f^{-}}_{2} = \delta/2$. 

Now we see that depending on the value of $C_{*}$ the quantity $\int f$ can be made arbitrarily large, while leaving unchanged $f^{-}$ and consequently also $\frac{\partial}{\partial \theta} f^{-}$. This ends the proof
\end{proof}

\subsection{The case $E = C(S^{1};\RR)$}
In this setting, we are not able to apply the easier formulation available in Hilbert spaces. Therefore, we have to resort to \cite[Theorem 4.2]{cannarsa2018invariance}, or more precisely to the converse of that result, that in our case reads: 
\begin{thm*}[Converse of {\cite[Theorem 4.2]{cannarsa2018invariance}}]
Assume $(\mathbf{H})$. Then $G$ is not invariant if and only if for any $\delta,C > 0$ there exists $x \in G_{\delta}$ such that
\begin{equation*}
D^{-}d_{G}(x)(\mathscr{A}x+\mathscr{B}(x)) \geq C d_{G}(x).
\end{equation*} 
\end{thm*}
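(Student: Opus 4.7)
The plan is to obtain this statement as the logical contrapositive of \cite[Theorem 4.2]{cannarsa2018invariance}, recalled earlier in this section: under hypothesis $(\mathbf{H})$, the set $G$ is invariant for the flow of \eqref{eq:abstractODEcannarsa} if and only if there exist $\delta, C > 0$ with
\begin{equation*}
D^{-}d_{G}(x)(\mathscr{A}x+\mathscr{B}(x)) \leq C d_{G}(x)\quad \forall x \in G_{\delta}.
\end{equation*}
Negating both sides of this biconditional componentwise yields a characterization of non-invariance by the quantifier-swapped statement with $>$ in place of $\leq$; since $>$ implies $\geq$, the implication $(\Rightarrow)$ is immediate.

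For the non-trivial implication $(\Leftarrow)$, I would argue by contradiction. Assume the stated $\geq$ condition holds for every $(\delta, C)$ but that $G$ is invariant. By \cite[Theorem 4.2]{cannarsa2018invariance} there exist $\delta^{*}, C^{*} > 0$ realising the $\leq$ bound on $G_{\delta^{*}}$. Applying the standing hypothesis with $\delta = \delta^{*}$ and the strictly larger constant $C = C^{*}+1$ produces a witness $x \in G_{\delta^{*}}$ for which the two inequalities sandwich as
\begin{equation*}
(C^{*}+1) d_{G}(x) \leq D^{-}d_{G}(x)(\mathscr{A}x+\mathscr{B}(x)) \leq C^{*} d_{G}(x),
\end{equation*}
so that $d_{G}(x) \leq 0$ and hence $d_{G}(x) = 0$. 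Consequently any witness delivered by the hypothesis must actually sit inside $G$.

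The main (and essentially only) subtlety is therefore the following: for every $x \in G$ one has $d_{G}(x)=0$ and $D^{-}d_{G}(x)(v) \geq 0$ automatically, since $d_{G} \geq 0$ attains its minimum at $x$, so the inequality with $\geq$ is trivially realised by any point of $G$ regardless of the dynamics. I would therefore emphasise that the criterion must be understood as providing witnesses in $G_{\delta}$ that genuinely lie outside $G$, where $d_{G}(x) > 0$ and the $\geq$ and $>$ formulations coincide. This is precisely the form in which the criterion will be used in the subsequent non-invariance result in $C(S^{1};\RR)$, where an explicit $x \in G_{\delta} \setminus G$ will be exhibited; there is no further difficulty beyond this logical manipulation together with the reminder on the trivial behaviour of $d_{G}$ on $G$.
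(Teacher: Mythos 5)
Your proposal is correct, and it is essentially the same (purely logical) route the paper takes: the paper offers no argument at all for this statement, presenting it as the immediate negation of \cite[Theorem 4.2]{cannarsa2018invariance}. What you add beyond the paper is worthwhile, though. First, you correctly note that negating the quantifiers yields the condition with strict inequality $>$, whereas the statement is written with $\geq$; your sandwich argument with $C = C^{*}+1$ shows the two formulations are interchangeable precisely because any witness with $d_{G}(x)=0$ is degenerate. Second, your observation that every $x\in G$ trivially satisfies $D^{-}d_{G}(x)(v)\geq 0 = C\,d_{G}(x)$ pinpoints the one place where the equivalence could fail: it forces $G_{\delta}$ to be read as $\{x : 0 < d_{G}(x)<\delta\}$, i.e.\ disjoint from $G$. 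The paper's displayed formula $G_{\delta}=\{x : d_{G}(x)<\delta\}$ would contain $G$ and would make the ``if'' direction false as literally written, but the accompanying sentence (``an open set contained in the complement of $G$'') confirms the exclusion of $G$ is intended; under that reading your argument closes the gap completely, and the witnesses exhibited later in the proof of Theorem \ref{teo:noninvarianceC} do lie in $G_{\delta}\setminus G$ as you require.
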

We recall that in the case where $G = E^{+}$ the positive cone, one has
\begin{equation*}
d_{G}(x) = \norm{x^{-}}_{\infty}.
\end{equation*}

\begin{thm}\label{teo:noninvarianceC}
The positive cone $G = E^{+}$ is not invariant for equation \eqref{eq:optimalProblem} in $E = C(S^{1};\RR)$.
\end{thm}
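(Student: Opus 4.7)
The plan is to mirror the $L^2$ argument but adapted to the non-Hilbertian, non-smooth distance $d_G(x)=\|x^-\|_\infty$, invoking the converse of \cite[Theorem 4.2]{cannarsa2018invariance} stated just above. So I need to show that for every $\delta,C>0$ there is an $x\in G_\delta\cap D(F)$ with
\begin{equation*}
D^{-}d_{G}(x)(Fx)\;\ge\; C\,\|x^{-}\|_{\infty}.
\end{equation*}

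The first step is to make the Dini derivative explicit. Since $x\mapsto\|x^-\|_\infty=\max_{\theta}(-x(\theta))^+$ is convex and continuous on $E=C(S^{1};\RR)$, the directional derivative exists and, for any $x$ with $\min_\theta x(\theta)<0$, an elementary argument about the directional derivative of a max-function on a compact index set yields
\begin{equation*}
D^{-}d_{G}(x)v \;=\; -\min_{\theta\in M(x)}v(\theta),\qquad M(x):=\{\theta\in S^{1}\,:\,x(\theta)=\min_{\xi}x(\xi)\}.
\end{equation*}
Thus the condition I must verify reduces to finding $x\in G_\delta\cap C^{2}(S^{1};\RR)$ for which $-\min_{\theta\in M(x)}(Fx)(\theta)>C\,\|x^-\|_\infty$; it is enough to exhibit a single $\theta^{*}\in M(x)$ at which $-(Fx)(\theta^{*})$ is as large as needed.

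The second step is the construction of such a test function. Here I depart from the $L^{2}$ proof, because in the sup-norm setting I cannot dilute the contribution of $x''(\theta^{*})$ by integrating; instead I kill it outright by using a \emph{flat-bottomed} profile. Fix $\theta^{*}=\tfrac{1}{2}$ and choose once and for all a function $\chi\in C^{2}(S^{1};[0,1])$ with $\chi\equiv 1$ on $[0,\tfrac{1}{4}]\cup[\tfrac{3}{4},1)$ and $\chi\equiv 0$ on $[\tfrac{3}{8},\tfrac{5}{8}]$, with smooth monotone transitions in between, so that $\int_{S^{1}}\chi(\theta)b_{0}(\theta)\,d\theta>0$. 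For any $\eps\in(0,\delta)$ and any $M>0$, set
\begin{equation*}
x_{M}(\theta)\;:=\;M\,\chi(\theta)\;-\;\eps\,\bigl(1-\chi(\theta)\bigr).
\end{equation*}
Then $x_{M}\in C^{2}(S^{1};\RR)\subseteq D(F)$, the minimum $-\eps$ is attained on the whole interval $[\tfrac{3}{8},\tfrac{5}{8}]\subseteq M(x_{M})$, and $\|x_{M}^{-}\|_{\infty}=\eps<\delta$, so $x_{M}\in G_{\delta}$.

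The final step is the estimate at $\theta^{*}$. By construction $x_{M}''(\theta^{*})=0$ and $x_{M}(\theta^{*})=-\eps$, so
\begin{equation*}
-(Fx_{M})(\theta^{*})\;=\;A(\theta^{*})\,\eps\;+\;\psi(\theta^{*})\!\!\int_{S^{1}}\!x_{M}(\xi)b_{0}(\xi)\,d\xi
\;=\;A(\theta^{*})\eps + \psi(\theta^{*})\bigl[(M+\eps)\!\!\int_{S^{1}}\!\!\chi b_{0}\,-\,\eps\!\!\int_{S^{1}}\!\!b_{0}\bigr],
\end{equation*}
with $\psi$ defined by \eqref{eq:defPsi}, strictly positive on $S^{1}$. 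Since $\int\chi b_{0}>0$ and $\psi(\theta^{*})>0$, the right-hand side tends to $+\infty$ as $M\to\infty$ while $C\,\|x_{M}^{-}\|_{\infty}=C\eps$ remains fixed. Picking $M$ large enough then gives $-(Fx_{M})(\theta^{*})>C\eps$, and hence $D^{-}d_{G}(x_{M})(Fx_{M})\ge -(Fx_{M})(\theta^{*})>C\,d_{G}(x_{M})$, as required. The main subtlety, and where the argument differs most from Theorem \ref{teo:noninvarianceL2}, is handling the diffusion term $\sigma x''$: in $L^{2}$ integration by parts made this term harmless, whereas in $C(S^{1};\RR)$ one must evaluate pointwise at the minimum, which forces the use of a genuinely flat plateau rather than a smooth quadratic dip, and forces one to work with a (possibly) non-singleton set $M(x)$ in the Dini-derivative formula.
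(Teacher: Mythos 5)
Your proof is correct, and it follows the same overall strategy as the paper: invoke the converse of \cite[Theorem 4.2]{cannarsa2018invariance}, bound the Dini derivative of $d_G(x)=\norm{x^-}_\infty$ from below by evaluating the vector field at a point where $x$ attains its (negative) minimum, and then construct a test function whose negative dip has sup-norm less than $\delta$ while its aggregate $\int x\,b_0$ is sent to $+\infty$, so that the nonlocal consumption term dominates. The differences are in execution. First, you use the sharper max-rule (Danskin) formula $D^-d_G(x)v=\max_{\theta\in M(x)}(-v(\theta))$, which lets you work with a non-singleton argmin set and reduce everything to a single point $\theta^*$; the paper instead only establishes the weaker lower bound by the minimum over the argmax of $x^-$ and therefore arranges the construction so that this set is the singleton $\{1/2\}$. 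For your purposes only the easy half of Danskin is needed ($d_G(x+\lambda v)\ge -x(\theta^*)-\lambda v(\theta^*)$ for $\theta^*\in M(x)$ and $\lambda$ small), so this step is solid even if one reads $D^-$ as a $\liminf$. Second, your flat-plateau profile $M\chi-\eps(1-\chi)$ is cleaner than the paper's piecewise quintic-quadratic-quintic gluing and kills the diffusion term exactly at $\theta^*$. One small correction to your closing remark: the plateau is a convenience, not a necessity. The paper's $C(S^1)$ proof uses precisely the quadratic dip $p_2$, and the term $-\sigma f''(\xi_0)=-2\sigma$ is then a fixed constant, independent of the large parameter $C_*$, so it is swallowed by the divergent nonlocal term just as in your argument; what genuinely changes relative to the $L^2$ case is only that the estimate is pointwise rather than integrated, not that a flat bottom is required.
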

\begin{proof}
The strategy of the proof is similar to that of Theorem \ref{teo:noninvarianceL2}.

We start by applying the computation in \cite[Example 5.3]{cannarsa2018invariance} with $f \equiv 0$. Consider $(h_{n})_{n \in \NN}$  a positive sequence going to zero: we can compute the lower Dini derivative of the distance from the set $G$, in any direction $v \in E$, as 
\begin{equation*}
D^{-}d_{G}(x)v := \lim_{n \to \infty} \frac{d_{G}(x+h_{n}v)-d_{G}(x)}{h_{n}} = \frac{\norm{(x+h_{n}v)^{-}}_{\infty}-\norm{x^{-}}_{\infty}}{h_{n}}. 
\end{equation*}
For each $n$ consider $\theta_{n}\in S^{1}$ an element which realizes the maximum 
\begin{equation*}
\theta_{n} \in \arg\max_{\theta \in S^{1}}(x+h_{n}v)^{-}(\theta).
\end{equation*}
Moreover, for $n$ sufficiently large, it holds
\begin{equation*}
\norm{(x+h_{n}v)^{-}}_{\infty} = x^{-}(\theta_{n}) - h_{n}v(\theta_{n}).
\end{equation*}
Notice in fact that for $n$ large the sign of $x+h_{n}v$ is the same as the sign of $x$ since $h_{n} \to 0$. Moreover, since $\norm{x^{-}} \neq 0$ it also holds $x(\theta_{N}) < 0$ definetively. Therefore 
\begin{equation*}
\norm{(x+h_{n}v)^{-}}_{\infty} = -x(\theta_{n}) - h_{n}v(\theta_{n}) = x^{-}(\theta_{n}) - h_{n}v(\theta_{n}).
\end{equation*} 
Hence we have 
\begin{equation*}
\norm{(x+h_{n}v)^{-}}_{\infty} \leq \norm{x^{-}} - h_{n}v(\theta_{n}),
\end{equation*}
which implies for $n$ large enough
\begin{equation*}
D^{-}d_{G}(x)v \leq -v(\theta_{n}).
\end{equation*}
By compactness, we can assume that $\theta_{n} \to \theta_{0} \in S^{1}$ as $n$ goes to infinity, and by the continuity of the function $x$ we conclude that 
\begin{equation*}
\theta_{0} \in \arg\max_{\theta \in S^{1}}x^{-}(\theta),
\end{equation*}
and therefore
\begin{equation*}\label{eq:inequalityDistanceProofC}
D^{-}d_{G}(x)v \leq -v(\theta_{0}) \leq \max\{-v(\theta)\,|\, \theta \in \arg\max_{\theta \in S^{1}}x^{-}(\theta)\}.
\end{equation*}
Consequently, it also holds, by choosing $z = -v$
\begin{multline*}
-D^{-}d_{G}(x)v = D^{-}d_{G}(x)(-v) = D^{-}d_{G}(x)z \leq \max\left\{ -z(\theta) \,\big\vert \, \theta \in \arg\max_{\theta \in S^{1}} x^{-}(\theta)\right\} \phantom{=} \\ = \max\left\{ v(\theta) \,\big\vert \, \theta \in \arg\max_{\theta \in S^{1}} x^{-}(\theta)\right\},
\end{multline*}
from which it follows
\begin{equation*}
-D^{-}d_{G}(x)v \leq \max\left\{ v(\theta) \,\big\vert \, \theta \in \arg\max_{\theta \in S^{1}} x^{-}(\theta)\right\},
\end{equation*}
or equivalently
\begin{equation*}
D^{-}d_{G}(x)v \geq -\max\left\{ v(\theta) \,\big\vert \, \theta \in \arg\max_{\theta \in S^{1}} x^{-}(\theta)\right\} = \min\left\{ -v(\theta) \,\big\vert \, \theta \in \arg\max_{\theta \in S^{1}} x^{-}(\theta)\right\}.
\end{equation*}
Therefore we recover the following sufficient condition for the non-invariance of the set $G$: for any $\delta, C>0$ there exists $f \in G_{\delta}$ such that
\begin{multline}\label{eq:minNonInvarianceC}
\min\left\{ -\sigma \frac{\partial^{2}f}{\partial \theta^{2}}(\theta) - A(\theta)f(\theta) + \psi(\theta)\int_{S^{1}} f(\xi)b_{0}(\xi)\, d\xi  \,\big\vert\, \theta \in \arg\max_{\theta \in S^{1}} f^{-}(\theta)\right\}\geq  \\ \geq C\norm{f^{-}}_{\infty},
\end{multline}
by using the definition of $\psi(\theta)$ introduced in Equation \eqref{eq:defPsi}.

As in the proof of Theorem \ref{teo:noninvarianceL2}, from the previous one can understand why given $\delta$ and $C$ it is possible to find such an $f$: on the left-hand side we see that the nonlocal term $ \int_{S^{1}}f(\xi)\,d\xi$ does not depend on $\theta \in \arg\max_{\theta \in S^{1}} f^{-}(\theta)$, so it is possible to make it arbitrarily large, without changing $f^{-}$. 

Let us consider the following function $f = f^{\delta, C}$, that is analogous to the one used in the proof of Theorem \ref{teo:noninvarianceL2} (refer to Figure \ref{fig:picTheoCounterExample}): 
\begin{equation*}
f(x) = \begin{cases}
p_{5}(x) \text{ for } x \in [0,1/2-a],\\
p_{2}(x) \text{ for } x \in [1/2-a,1/2+a],\\
q_{5}(x) \text{ for } x \in [1/2+a,1].
\end{cases}    
\end{equation*}
Now take $a = \sqrt{\frac{\delta}{2}}$, and define the middle piece $p_{2}(x)$ as
\begin{equation*}
p_{2}(x) = (x-\frac{1}{2}+a)(x-\frac{1}{2}-a).
\end{equation*}
Notice that by the choice of $a$ one has $\norm{p_{2_{|{(1/2-a,1/2+a)}}}}_{\infty} = \delta/2<\delta$. The functions $p_{5}$ and $q_{5}$ are instead defined implicitly as polynomials of degree five, satisfying the following conditions:
\begin{equation*}
\begin{cases}
p_{5}(\frac{1}{2}-a) = 0\\
p^{'}_{5}(\frac{1}{2}-a) = p^{'}_{2}(\frac{1}{2}-a)\\
p^{''}_{5}(\frac{1}{2}-a) = p^{''}_{2}(\frac{1}{2}-a)\\
p_{5}(0) = C_{*}\\
p^{'}_{5}(0) = 0\\
p^{''}_{5}(0) = 0\\
\end{cases},\,
\begin{cases}
q_{5}(\frac{1}{2}+a) = 0\\
q^{'}_{5}(\frac{1}{2}+a) = p^{'}_{2}(\frac{1}{2}+a)\\
q^{''}_{5}(\frac{1}{2}+a) = p^{''}_{2}(\frac{1}{2}+a)\\
q_{5}(1) = C_{*}\\
q^{'}_{5}(1) = 0\\
q^{''}_{5}(1) = 0\\
\end{cases},
\end{equation*}
where the choice of $C_{*} > 0$ is arbitrary.

It is trivial to see that the only absolute maximum of $f^{-}$ is attained in $\xi_{0} = \frac{1}{2}$ so that the minimum on the left-hand side of \eqref{eq:minNonInvarianceC} can be omitted since the set is a singleton, and one has only to prove that 
\begin{equation*}
-\sigma \frac{\partial^{2}f}{\partial \theta^{2}}(\xi_{0}) - A(\xi_{0})f(\xi_{0}) + \psi(\xi_{0})\int_{S^{1}} f(\xi)b_{0}(\xi)\, d\xi \geq C\delta/2.
\end{equation*}
From the previous, we see that by changing the value of $C^{*}$ in the definition of $p_{5}$ and $q_{5}$, we can make $\psi(\xi_{0})\int_{S^{1}} f(\xi)b_{0}(\xi)\, d\xi $ arbitrarily large (recall $\psi(\cdot) > 0$), and leaving untouched all the other terms since $\xi_{0}$ and $f(\xi_{0})$ are unaffected by the choice of $C^{*}$. This ends the proof.
\end{proof}
\begin{rem}
Observe that in Theorem \ref{teo:noninvarianceL2} and Theorem \ref{teo:noninvarianceC} there are no additional assumptions on the parameters to prove the non-invariance results. This is coherent with the proof where it is shown that the only important element that matter is the total aggregate capital level. This intuition will be also confirmed in Section \ref{sec:numerical} via numerical simulations. 
\end{rem}

\section{Numerical experiments in the positive cone}\label{sec:numerical}
In this Section, we try to further characterize what we have proven in the previous section by providing some numerical examples. In particular, we look at the case where $E = C(S^{1};\RR)$ and the initial condition lies in $E^{+}$. For the sake of simplicity, we only consider examples where both the exogenous technological progress $A(\theta)$ and the population density $\eta(\theta)$ are assumed to be space homogeneous, i.e. $A(\theta) \equiv A > 0$ and $\eta(\theta)\equiv \eta > 0$.  In the two examples that we will analyze we will consider two different initial conditions $K_{0}$, which differ from each other only in terms of the total integral $\int_{S^{1}}K_{0}(\xi)\,d\xi$. 

Introduce the polynomial 
\begin{equation*}
p_{3}(x) = 1 - 6x^{2} + 8x^{3} - 3x^{4}
\end{equation*}
which satisfies $p_{3}(0) = p_{3}(1) = p_{3}'(0) = p_{3}'(1) = 0$. 
Consider therefore, for $R = 1/4$ and $\epsilon = 1/10$, the function 
\begin{equation*}
f_{0}(\theta) = 
\begin{cases}
1 \quad &\text{ for } \theta \in (\frac{1}{2}-R+\frac{\epsilon}{2},\frac{1}{2}+R-\frac{\epsilon}{2}),\\
0 \quad &\text{ for } \theta \in [0,\frac{1}{2}-R-\frac{\epsilon}{2})\cup (\frac{1}{2}+R+\frac{\epsilon}{2},1),\\
p_{3}(\frac{1}{\epsilon}(\theta-\frac{1}{2}-R+\frac{\epsilon}{2})) &\text{ for } \theta \in [\frac{1}{2}+R-\frac{\epsilon}{2},\frac{1}{2}+R+\frac{\epsilon}{2}], \\
p_{3}(\frac{1}{\epsilon}(-\theta+\frac{1}{2}-R+\frac{\epsilon}{2})) &\text{ for } \theta \in [\frac{1}{2}-R-\frac{\epsilon}{2},\frac{1}{2}-R+\frac{\epsilon}{2}]. \\
\end{cases}
\end{equation*}
Roughly speaking the function $f_{0}(\theta)$ is a continuous function over $S^{1}$ which is zero around the point $1/2$, takes the value $1$ in a neighborhood of the identified extremes $0,1\in S^{1}$ and is interpolated continuously elsewhere. 

We thus consider $K_{0}(\theta) := \overline{K}f_{0}(\theta)$ for different values of $\overline{K} > 0$. This situation corresponds to having an initial poorer region around $\theta = 1/2$ and a richer one closer to the extremes. The value of $\overline{K}$ modulates the level of inequality between the poor and the rich region of our domain. In particular, it also affects the total level of aggregate capital $\int_{S^{1}}K_{0}(\xi)\,d\xi$ that is present in the system at time $t = 0$. As previously observed in the proof of Theorems \ref{teo:noninvarianceL2} and \ref{teo:noninvarianceC}, the possibility of making large the (negative) term related to the optimal consumption, whilst leaving unchanged all the other terms of Equation \eqref{eq:optimalProblem} is the key to finding negative values in the capital level $K(t,\theta)$ for $t > 0$. 

All the simulations that we will perform will range only for a short period $T = 1$. Since we are starting with an initial condition that is zero for some $\theta \in S^{1}$, we expect that if negative values arise, those will appear close to the initial time $t = 0$. After a few periods, the diffusion effect will take place, equalizing quickly the level of capital along space.\footnote{This is true since we are working in the space homogeneous case $A(\theta) \equiv A > 0$ and $\eta(\theta)\equiv \eta > 0$.}  We refer to Figure \ref{fig:numericlSimulation} for the considered simulated scenarios. All the parameters used in the simulations are summarized in Table \ref{tab:parameterValues}.
\begin{table}[!htbp]
\centering
\begin{tabular}{rcl}
\hline \hline
Parameters & Value & Description \\	
\hline	
$T$ & $1$  & Time horizon of simulation\\
$A$ & $10^{-2}$  & Exogenous technological progress\\
$\eta$ & $10^{-2}$  & Exogenous population density\\
$\sigma$  & $10^{-2}$ & Diffusion coefficient\\
$q$ & 1 & Risk aversion - population density\\
$\gamma$ & $1/2$ & Risk aversion - consumption\\
$R$ & $1/4$ & Radius of the poor region\\
$\epsilon$ & $1/10$ & Smoothing coefficient poor region\\
$\overline{K}$ & $10,100$ & Scaling factor of initial capital level\\
\hline \hline
\end{tabular}
\caption{The baseline values of the model's parameter used for numerical explorations.}
\label{tab:parameterValues}
\end{table}

\begin{figure}[H]
\centering
\begin{subfigure}[b]{0.47\textwidth}
\centering
\includegraphics[width=\textwidth]{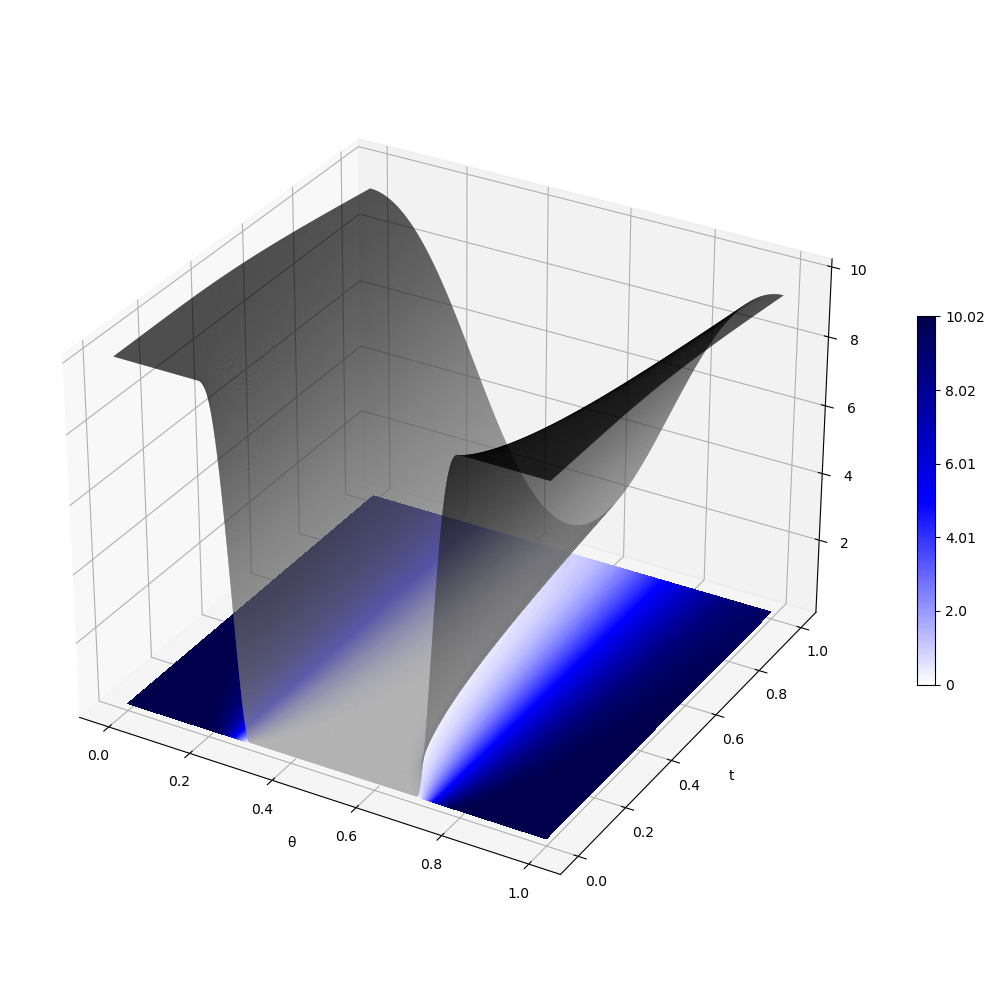}
\caption{$\overline{K} = 10$.}
\label{fig:numericlSimulation10}	
\end{subfigure}
\begin{subfigure}[b]{0.47\textwidth}
\centering
\includegraphics[width=\textwidth]{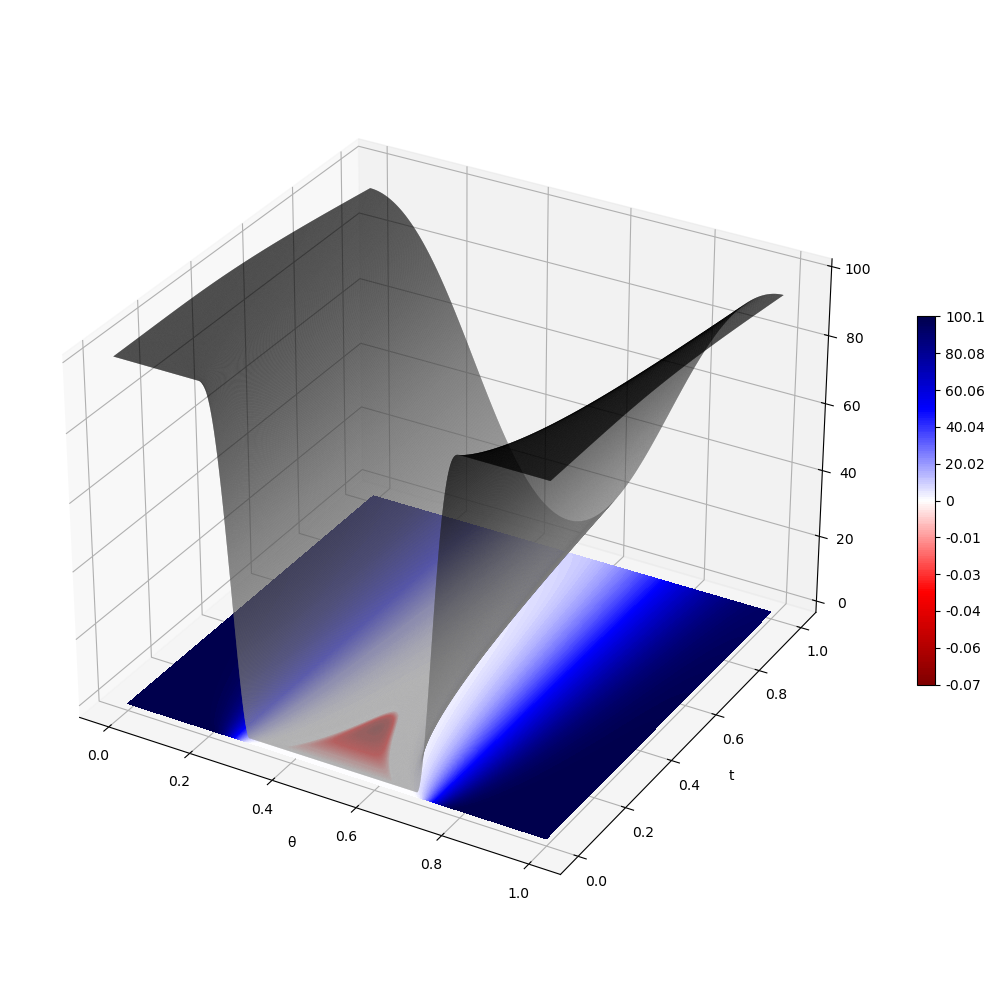}
\caption{$\overline{K} = 100$.}
\label{fig:numericlSimulation100}	
\end{subfigure}
\caption{Numerical results for different values of $\overline{K}$. The colormap below the surface plots shows positive (blue) and negative (red) values of the solution. } 	
\label{fig:numericlSimulation}	
\end{figure}

Figure \ref{fig:numericlSimulation10} shows the resulting capital allocation for $\overline{K} = 10$. We see in particular that, even though the initial capital profile is zero in some regions, the corresponding optimal capital allocation will never be negative. On the contrary in Figure \ref{fig:numericlSimulation100}, for $\overline{K} = 100$, we see that capital gets negative for a brief period and comes back positive everywhere afterward. This confirms the idea that, independently of the choice of the parameters of the model, Theorems \ref{teo:noninvarianceL2} and Theorem \ref{teo:noninvarianceC} hold. In particular, for any choice of the model parameters it always exists an initial capital allocation for which the capital distribution will become negative for some positive time.  However, a full characterization of the relation between the model parameter and the initial condition that leads to negative values of the solution is still lacking. 

\section{Conclusions}\label{sec:conclusion}
In this paper, we provided a rigorous proof of the non-invariance of the positive cone for the dynamic of equation \eqref{eq:optimalProblem} in the cases $E = L^{2}(S^{1};\RR)$ and $E = C(S^{1};\RR)$. This in particular shows that, if the initial capital allocation of Problem \eqref{eq:problemP} assumes the value zero in some part of the domain $S^{1}$, then the consumption strategy defined in Equation \eqref{eq:optimalConsumptionAux} is not optimal, since it does not satisfy the state constraint $K(t,\theta) \geq 0$.
This in a sense provides an opposite (negative) result, with respect to the one proved in \cite{banachLattice2021}, where the authors proved that under some assumptions on the initial condition, the solution to the auxiliary control problem $\eqref{eq:problemPauxiliary}$ is also a solution to the true one. 
Hence, one interesting question to analyze, since as we saw Equation \eqref{eq:optimalConsumptionAux} is not an admissible strategy for certain initial conditions, would be that of finding the optimal consumption strategy which preserves positivity, at least in one particular example where \eqref{eq:optimalConsumptionAux} is known to fail. 

We conclude by suggesting some strategies for future works,  extending the result shown in this paper. A next step forward in the full comprehension of Problem \eqref{eq:problemP} would be being able to treat also the case with arbitrary strictly positive initial conditions, thus extending our study to the strictly positive cone $E^{++}$. One possible strategy to move in this direction would be that of studying the problem in 
\begin{equation*}
E^{+}_{\epsilon} := \{K \in E\,|\, K(\cdot) \geq \epsilon\},
\end{equation*}
which would allow us to use the same methodology applied in this manuscript since $E^{+}_{\epsilon}$ is again a convex closed set. However, it is not yet clear how one could pass from the non-invariance in $E^{+}_{\epsilon}$ for every small $\epsilon >0$ to that in $E^{++}$.

Another possible approach is to consider Equation \eqref{eq:optimalProblem} with $\sigma = 0$, i.e. the case with no capital reallocation over space. In particular, in this case, it is possible to solve Equation \eqref{eq:optimalProblem} explicitly for every point of $\theta \in S^{1}$, since the problem now reduces to an infinite set of ODEs parameterized by $\theta$, which are only weakly coupled by the term in  Equation \eqref{eq:optimalConsumptionAux}. In particular, one can explicitly understand that some strictly positive initial conditions become negative, see Figure \ref{fig:sigma0} for a numerical example.
Then, one may prove that the solution to Equation $\eqref{eq:optimalProblem}$ exhibits continuity with respect to the parameter $\sigma$ in the value $0$, so that by taking $\sigma$ small enough the solution should be close to the one with no diffusion, see Figures \ref{fig:sigma1e-4} and \ref{fig:sigma1e-5}. Observe in particular how, when $\sigma = 10^{-5}$, the solution assumes negative values, as in the case with no diffusion. However, this continuity property must hold in a strong enough topology that allows the control of the solution in a pointwise manner. 

\begin{figure}[!htpb]
\centering
\begin{subfigure}[b]{0.57\textwidth}
\includegraphics[width=\textwidth]{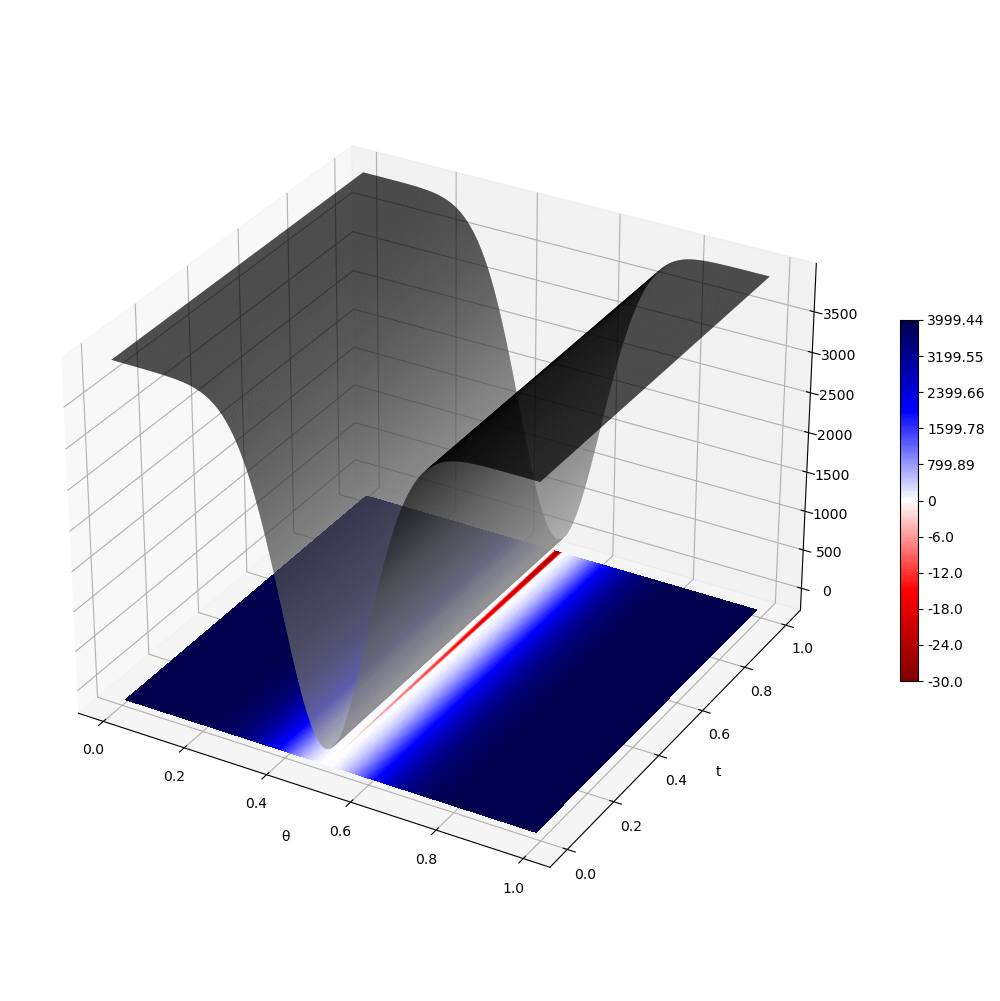}
\caption{$\sigma = 0$}
\label{fig:sigma0}
\end{subfigure}
\begin{subfigure}[b]{0.47\textwidth}
\centering
\includegraphics[width=\textwidth]{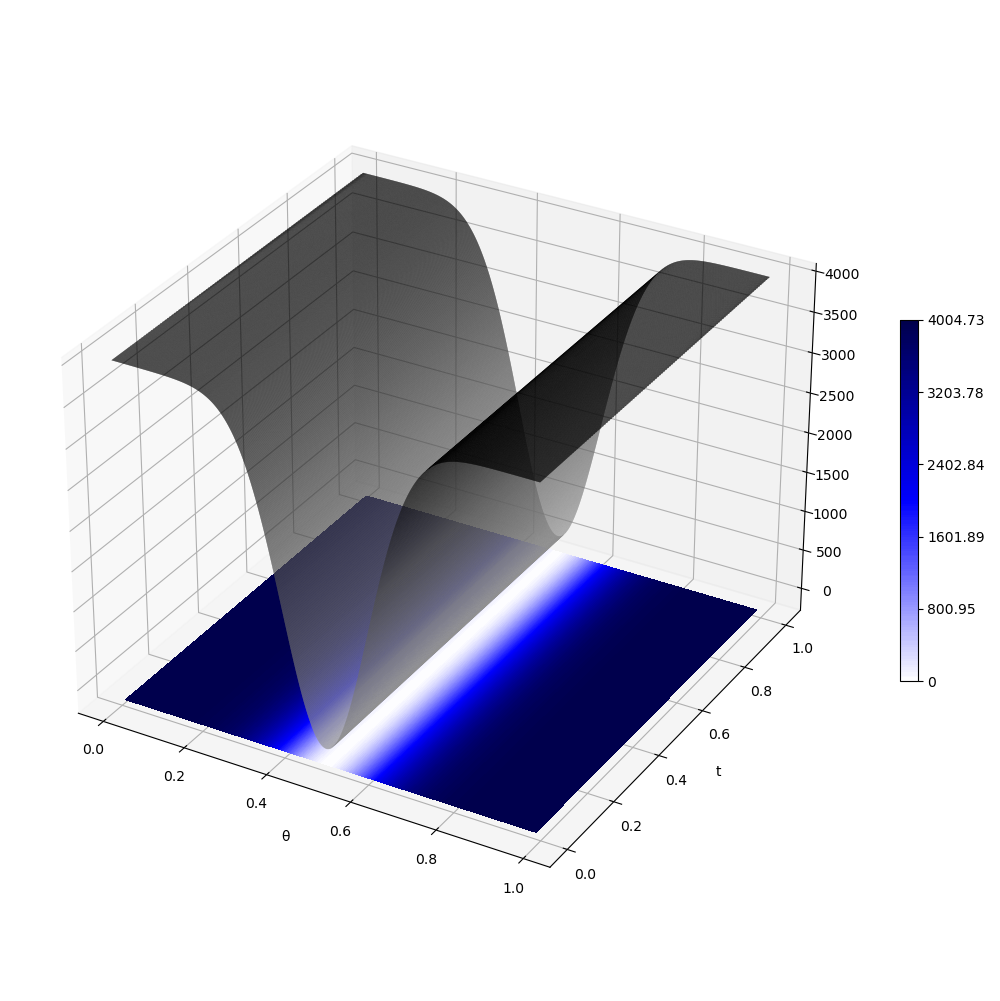}
\caption{$\sigma = 10^{-4}$}
\label{fig:sigma1e-4}
\end{subfigure}
\begin{subfigure}[b]{0.47\textwidth}
\centering
\includegraphics[width=\textwidth]{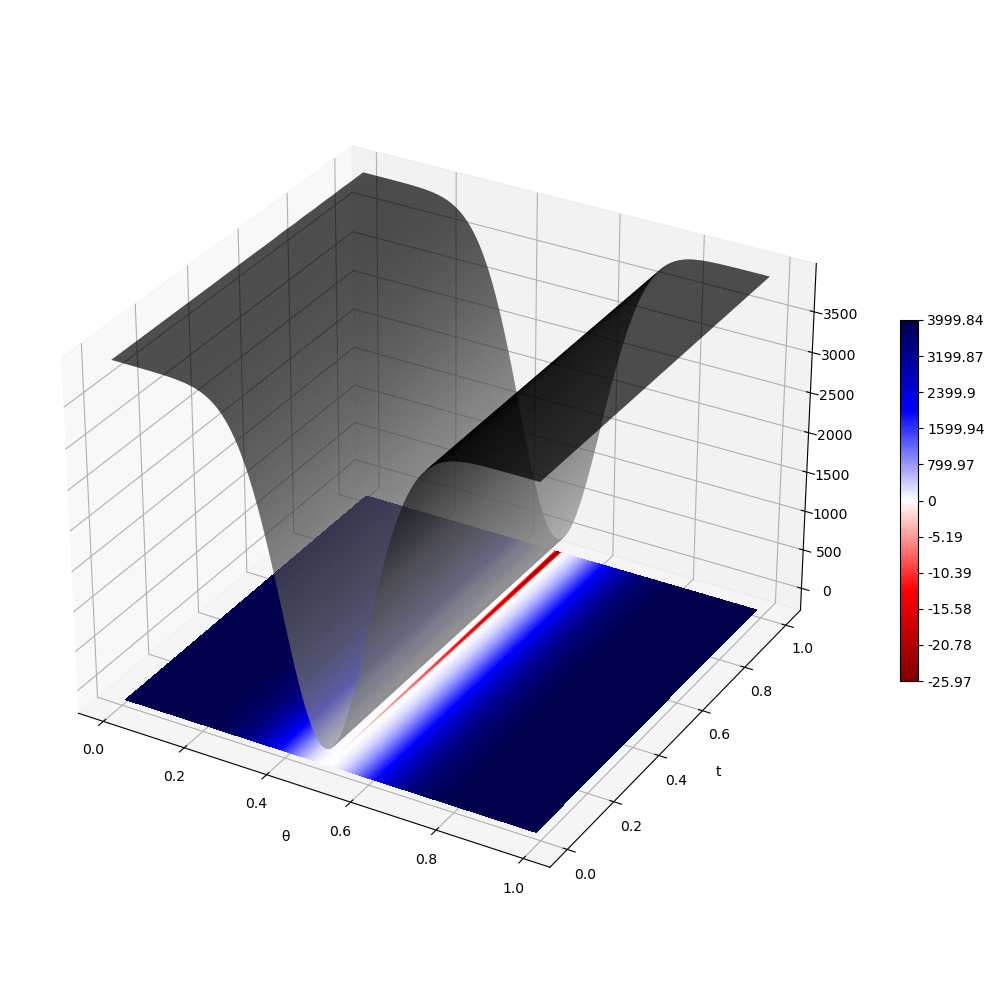}
\caption{$\sigma = 10^{-5}$}
\label{fig:sigma1e-5}
\end{subfigure}
\caption{Numerical results for different values of $\sigma$. The colormap below the surface plots shows positive (blue) and negative (red) values of the solution. }	
\label{fig:conclusion}	
\end{figure}

\noindent\textbf{Acknowledgements}: The author has been supported by the Italian Ministry of University and Research (MIUR), in the framework of PRIN project 2017FKHBA8 001 (The Time-Space Evolution of Economic Activities: Mathematical Models and Empirical Applications).\\
\noindent\textbf{Compliance with Ethical Standards}: The authors have no conflicts of interest to declare. The authors have no relevant financial or non-financial interests to disclose. This article does not contain any studies involving animals or humans performed by the author. 
\clearpage

\end{document}